\newtheorem{theorem}{Theorem}
\newtheorem{lemma}[theorem]{Lemma}
\newtheorem{auxiliary code}{Auxiliary Code}
\begin{document}
\title{Write and Read Channel Models for 1S1R Crossbar Resistive Memory with High Line Resistance}

\author{Zehui Chen and Lara Dolecek\\
	Department of Electrical and Computer Engineering - University of California, Los Angeles\\\textit {chen1046@ucla.edu, dolecek@ee.ucla.edu}\vspace{-1em}}
\date{}

\maketitle

\begin{abstract}
	Crossbar resistive memory with 1 Selector 1 Resistor (1S1R) structure is attractive for low-cost and high-density nonvolatile memory applications. As technology scales down to the single-nm regime, the increasing resistivity of wordline/bitline becomes a limiting factor to device reliability. This paper presents write/read communication channels while considering the line resistance and device variabilities by statistically relating the degraded write/read margins and the channel parameters. Binary asymmetric channel (BAC) models are proposed for the write/read operations. Simulations based on these models suggest that the bit-error rate of devices are highly non-uniform across the memory array. These models provide quantitative tools for evaluating the trade-offs between memory reliability and design parameters, such as array size, technology nodes, and aspect ratio, and also for designing coding-theoretic solutions that would be most effective for crossbar memory. 
\end{abstract}

\section{Introduction}
The crossbar resistive memory, whereby bistable memristors are placed at the crosspoint of wordlines and bitlines, is one promising candidate for the next generation nonvolatile memory due to its inherent $4F^2$ device density and its simple crossbar structure \cite{ielmini2015resistive}. Meanwhile, as technology scales down to single-digit-nm, simultaneously scaled wordline/bitline resistances increasingly become a limiting factor to device reliability and hence memory scalability \cite{liang2013effect}. 

Previous literature has extensively shown that even moderate line resistance significantly degrades the reliability of the write and read operations. The degradation of the write/read margins due to high line resistance for the wort-case memory cell, i.e., the cell that is furthest from the source and ground, are studied in \cite{liang2013effect,chen2016design,kim2015numerical}. The adverse effect of the line resistance on the write/read margins for cells across the memory array are studied in \cite{chen2013comprehensive,shin2010data} by solving a system of Kirchhoff's current law (KCL) equations. While these studies focused on the degradation of the write/read margins, it remains unclear how the degraded write/read margins affect the system level reliability metric, e.g., the bit-error rate (BER). In other words, channel models are not yet well-established for this problem.

It is demonstrated in \cite{chen2013comprehensive} that, when considering the line resistance in resistive memory, the write margins are nonuniform across the array, which leads to nonuniform reliability levels in the memory array. Designing ECCs for the worst-case often leads to overly conservative code design and is therefore not rate efficient. For example, in \cite{zorgui2019polar}, the authors designed a non-stationary polar code targeting channels with different reliability levels, which are characterized empirically by simulations. Moreover, \cite{zorgui2019polar} also showed that using more precise channel modeling, i.e., using the binary asymmetric channel (BAC) instead of the binary symmetric channel (BSC), provides an order of magnitude improvement in BER, which proves the necessity of precise channel models. 

In this work, we propose BAC models for writing to and reading from memory devices in crossbar memory, parameterized by device parameters, array size, wordline/bitline resistances and device location by statistically relating the degraded write/read margins of cells at different locations to the channel parameters. Our analytical channel models, which take into account the device location, provide quantitative tools for analyzing the aforementioned non-uniformity and the trade-off between device parameters and memory reliability. These models are therefore beneficial for system engineers when designing the next generation storage systems. For the read channel, we also propose an efficient procedure to compute an optimal read threshold. We showed that the row bit-error rate of the read channel is reduced by a large factor using the optimal read threshold. 

Previous studies on the write/read margins assume deterministic High Resistance State (HRS) and Low Resistance State (LRS) for the memory device whereas the HRS and LRS are nondeterministic in nature \cite{ji2015line,chen2011variability}. Our write/read channel models, which are derived probabilistically, allow us to take the resistance variability of LRS and HRS into consideration for more precise modeling. 

The content of this paper is organized as follows. Section II provides background on crossbar resistive memory and the write/read operation. The circuit models and the variabilities are also discussed in Section II. Section III presents the channel characterization for the write and read operations. Simulation results on the proposed channel models with various parameters are also presented in Section III. Section IV presents our study on the optimal read threshold and its simulation results. We conclude and discuss future research in Section V.

\section{Preliminary}
\subsection{1S1R Crossbar resistive memory Background and Model}
In crossbar resistive memory array, the logical state $0$ or $1$ is represented by the HRS or LRS of a memory cell, respectively. For a bipolar memristor, the state of a cell is switched from LRS to HRS (Reset Operation) or from HRS to LRS (Set Operation) by applying a positive or negative voltage across the memory cell, respectively. For the write operation, we consider the so called ``V/2'' write scheme (cf. \cite{chen2016design}) as it is usually more energy-efficient than the so called ``V/3'' write scheme. In particular, when writing to a selected cell, the wordline and bitline of the selected cell are biased at the write voltage ($V_{w\_set}$ or $V_{w\_reset}$) and 0, respectively, while other wordlines and bitlines are biased at half of the write voltage to prevent unintentional write, as shown in Fig. \ref{fig:write model}. 

\begin{figure}[h]
	\centering
	\begin{subfigure}{0.4\textwidth}
	\centering
	\includegraphics[scale=0.23]{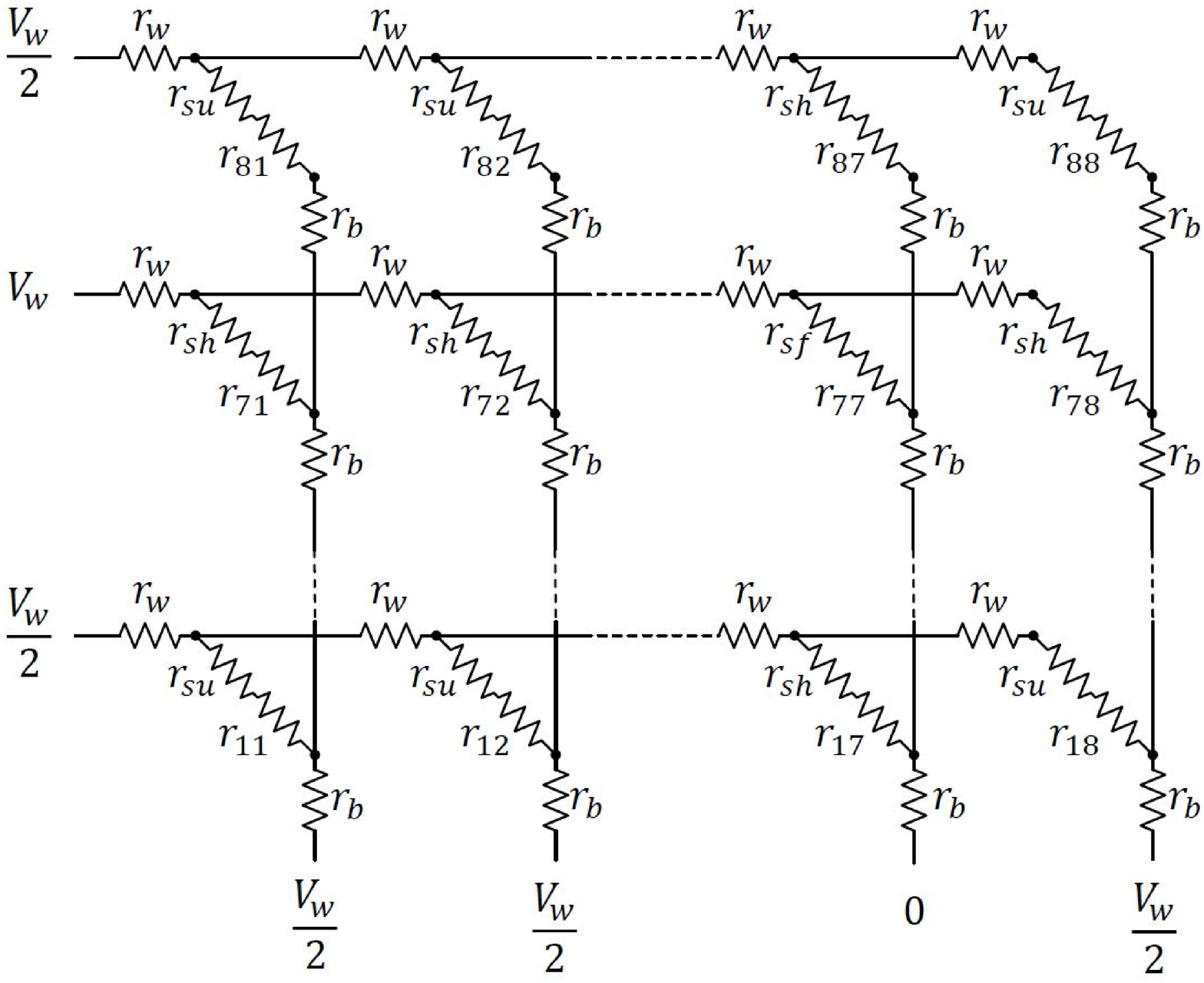}	
	\caption{Circuit model for writing to a $8\times8$ array.}
	\label{fig:write model}
	\end{subfigure}
	\begin{subfigure}{0.4\textwidth}
	\centering
	\includegraphics[scale=0.23]{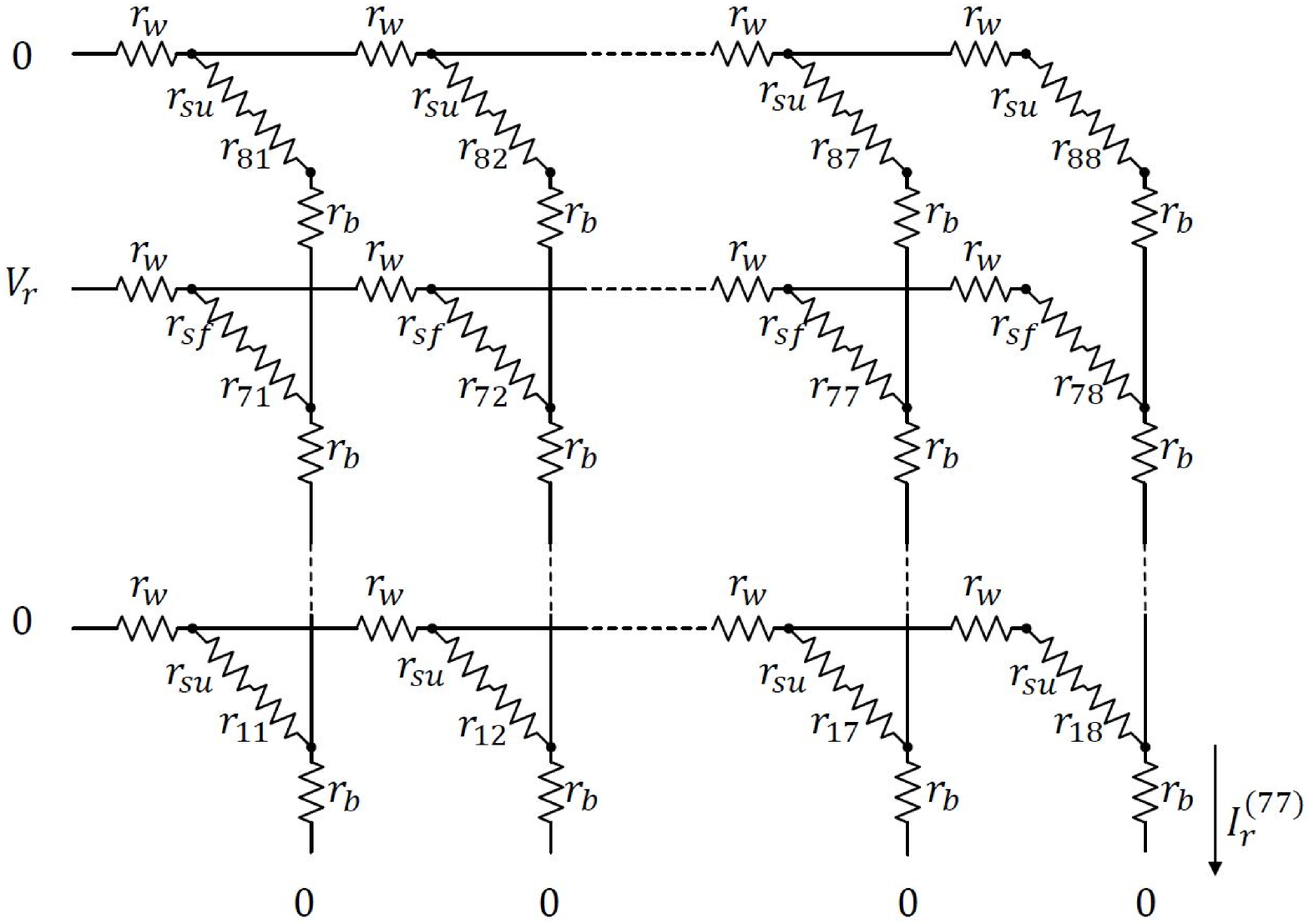}	
	\caption{Circuit model for reading from a $8\times8$ array.}
	\label{fig:read model}
	\end{subfigure}
	\caption{Example Circuit Models ($V_w$ denotes $V_{w\_set}$ or $V_{w\_reset}$)}
	\vspace{-1em}
\end{figure}

For the read operation, we consider the current-mode sensing scheme as it has a smaller latency compared with the voltage-mode sensing scheme \cite{chen2016design}. When reading a selected cell, a read voltage ($V_r$) is applied on its wordline and all other wordlines and bitlines are grounded. A current is sensed by the sensing amplifier located at the end of its bitline, and is used to determine the state of the selected cell, as shown in Fig. \ref{fig:read model}.

In this paper, we focus on crossbar resistive memory with the widely used 1 selector 1 resistor (1S1R) structure, where highly nonlinear selectors are connected in series with the memristors to prevent write and read disturbs. For both write and read operations, when the voltage across a selector is close to the applied voltage, we say that this selector is fully selected and we assume it has resistance $r_{sf}$; when the voltage across a selector is close to 0, we say that this selector is un-selected and we assume it has resistance $r_{su}$. For the write operation, since other cells on the wordline and bitline of the selected cell have voltage close to half of the write voltage across them, we say that the selectors for those cells are half-selected and we assume they have resistance $r_{sh}$. In general, $r_{sf}<<r_{sh}<r_{rs}$. An ideal selector have  parameters $r_{sf} = 0$ and $r_{sh}=r_{su}=\infty$. Our proposed model is a general one that does not have the ideal selector assumption. Meanwhile, since the main focus is the adverse effect of line resistance, we use the ideal selector assumption to provide mathematical insights in III.B and to simplify our simulations in III.D and IV.C. Throughout this work, we assume that the interconnect resistances of wordlines and bitlines are constant across the array, and they are denoted by $r_w$ and $r_b$ respectively.
\subsection{Memristor Variabilities and Models}
In this paper, we consider two variabilities of memristor, the non-deterministic write operation and the non-deterministic resistance value for each resistance state.  It is widely observed that the switching operations of memristor are stochastic and follow log-normal switching time distributions, with distribution parameters depend on the applied voltage \cite{medeiros2011lognormal,niu2012low}. Our models for the switching time distributions are adopted from \cite{medeiros2011lognormal} and more details are provided in Section III. 

Previous works (cf. \cite{liang2013effect} - \cite{shin2010data}) on the degradation of write and read margins due to high line resistance assume deterministic resistance states, e.g., HRS resistance is $10000\Omega$ and LRS resistance is $100\Omega$. Meanwhile, due to both device-to-device variation and cycle-to-cycle variation, the resistance of each state is highly non-deterministic \cite{ji2015line,chen2011variability}. To incorporate this variability into our reliability analysis, we use random variables to represent the resistance of the memory cells. Based on observations in \cite{ji2015line,chen2011variability}, we assume they are i.i.d. and their conditional distributions, conditioned on their states, follow log-normal distributions.  For example, let i.i.d. Bernoulli($q$) random variable $S_{ij}$ denote the state of cell $(i,j)$, with $S_{ij} = 1$ for LRS and $S_{ij} = 0$ for HRS. Let $R_{ij}$ be the associated random variable denoting the resistance of cell $(i,j)$. Then our model assumes:
\[\ln(R_{ij}|S_{ij}=1)\sim \mathcal{N}(\mu_{L},\sigma_{L}^2),\]
and
\[\ln(R_{ij}|S_{ij}=0)\sim \mathcal{N}(\mu_{H},\sigma_{H}^2).\]

\section{Channel Models}
Our proposed channel models are depicted in Fig. \ref{fig:cascaded_channel}. We model the write, read and cascaded channels as binary asymmetric channels (BACs). We discuss in the following subsections how the channel parameters are related to device parameters, line resistance and device location. 
\begin{figure}[h]
	\centering
	\includegraphics[scale=0.24]{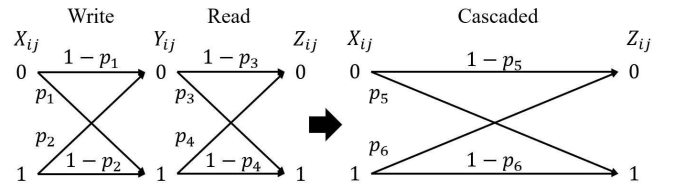}	
	\caption{Channel Models}
	\label{fig:cascaded_channel}
	\vspace{-1em}
\end{figure}

We denote the state we want to write to cell $(i,j)$ by $X_{ij}\in\{0,1\}$, the state actually written by $Y_{ij}\in\{0,1\}$ and the detected (for read operation) state by $Z_{ij}\in\{0,1\}$. Note that even though the information is stored as the resistance of a cell, we choose to use binarized state variable $Y_{ij}$ because firstly it {enables} us to utilize a well-known result in the literature \cite{medeiros2011lognormal} that characterizes the switching of a device; and secondly it allows mathematical tractability and the separation of the write/read channels. Information about the resistance of a cell is instead embedded in the resistance distribution. Also note that with the binarized state for a cell, multiple write/read operations are also independent. Therefore, the cascaded channel model is still valid if one writes to and reads from a cell multiple times.
\subsection{Write Channel}
In this section, we derive the write channel. We note that the write operation is affected by the previous state of cell $(i,j)$. We let this be $S^*_{ij}$ and the associated resistance value be $R^*_{ij}$. We assume that when the previous state is the same as the state we want to write, the write operation is always successful, i.e., 
$P(Y_{ij}=1|X_{ij}=1,S^*_{ij}=1)=1,$
and $P(Y_{ij}=0|X_{ij}=0,S^*_{ij}=0)=1.$

When the previous state is not the same as the state we want to write, a sufficient write voltage and a sufficient write time {are} required to change the state of the cell. Due to high line resistances, the effective write voltage on a cell could be much smaller than the desired write voltage, i.e., the write margin is decreased. We denote the effective write voltage on a cell $(i,j)$ as $\tilde{V}_w(r^*_{ij},i,j)$ where $r^*_{ij}$ is a realization of $R^*_{ij}$. With a method similar to the one described in \cite{chen2013comprehensive}, $\tilde{V}_w(r^*_{ij},i,j)$ can be obtained by solving a system of KCL equations using the circuit model described in Section II.A. We map the degraded write margin to the decreased write reliability by considering the log-normal switching time distribution, adopted from \cite{medeiros2011lognormal}. With fixed switching times $t_{set}$ and $t_{reset}$, the log-normal switching time distributions lead to the following:
\begin{equation}
\label{equ:set_success}
\begin{split}
&P(Y_{ij}=1|X_{ij}=1,S^*_{ij}=0,R^*_{ij}=r^*_{ij})\\
=&1-Q\left(\frac{\ln t_{set} -\ln(\tau^{(ij)}_{set})}{\sigma_{set}}\right),
\end{split}
\end{equation} 
and
\begin{equation}
\begin{split}
\label{equ:reset_success}
&P(Y_{ij}=0|X_{ij}=0,S^*_{ij}=1,R^*_{ij}=r^*_{ij})\\
=&1-Q\left(\frac{\ln t_{reset} -\ln(\tau^{(ij)}_{reset})}{\sigma_{reset}}\right),
\end{split}
\end{equation} 
where $Q(\cdot)$ is the $Q$-function, i.e., $Q(x) = \frac{1}{\sqrt{2\pi}}\int_{x}^{\infty}\exp(-\frac{u^2}{2})du$. Parameters $\sigma^2_{set}$ and $\sigma^2_{reset}$ are the variance of the normal distributions associated with the set and reset switching time distribution, respectively, which are independent of $\tilde{V}_w(r^*_{ij},i,j)$, according to \cite{medeiros2011lognormal}. Parameters $\tau^{(ij)}_{set}$ and $\tau^{(ij)}_{reset}$ are the median of the set and reset switching time, respectively. Note that in the above equations, to be consistent with the existing literature \cite{medeiros2011lognormal,niu2012low}, we use the median parameterization of the log-normal distribution. According to the literature, the medians of the switching times ($\tau^{(ij)}_{set}$ and $\tau^{(ij)}_{reset}$ in $\mu s$) are exponentially dependent on the effective write voltage. We therefore parameterize the medians as following:
\[\ln\left(\tau^{(ij)}_{set}\right) = \alpha_{set}\tilde{V}_w(r^*_{ij},i,j)+\beta_{set},\]
and
\[\ln\left(\tau^{(ij)}_{reset}\right) = \alpha_{reset}\tilde{V}_w(r^*_{ij},i,j)+\beta_{reset}.\]

Using (\ref{equ:set_success}), (\ref{equ:reset_success}) and marginalizing over the conditionally log-normally distributed random variable $R^*_{ij}$, we get:
\begin{equation}
\label{equ:set_fail}
\begin{split}
&P(Y_{ij}=0|X_{ij}=1,S^*_{ij}=0)=\int_{-\infty}^{\infty}\frac{1}{\sqrt{2\pi}r^*_{ij}\sigma_{H}}\\
&\times \exp\left[-\frac{\left(\ln r^*_{ij}-\mu_H\right)^2}{2\sigma_{H}^2}\right]Q\left(\frac{\ln t_{set} -\ln(\tau^{(ij)}_{set})}{\sigma_{set}}\right)dr^*_{ij},
\end{split}
\end{equation} 
and
\begin{equation}
\label{equ:reset_fail}
\begin{split}
&P(Y_{ij}=1|X_{ij}=0,S^*_{ij}=1)=\int_{-\infty}^{\infty}\frac{1}{\sqrt{2\pi}r^*_{ij}\sigma_{L}}\\
&\times \exp\left[-\frac{\left(\ln r^*_{ij}-\mu_L\right)^2}{2\sigma_{L}^2}\right]Q\left(\frac{\ln t_{reset} -\ln(\tau^{(ij)}_{reset})}{\sigma_{reset}}\right)dr^*_{ij}.
\end{split}
\end{equation} 

Putting (\ref{equ:set_fail}) and (\ref{equ:reset_fail}) together with the prior symbol probability $q=P(S^*_{ij}=0)$, we arrive at the write binary asymmetric channel, depicted in Fig. \ref{fig:cascaded_channel}, for the write operation with the following channel parameters:
\begin{equation}
\label{equ:p1}
p^{(ij)}_1 = (1-q)P(Y_{ij}=1|X_{ij}=0,S^*_{ij}=1),
\end{equation}
and
\begin{equation}
\label{equ:p2}
p^{(ij)}_2 = qP(Y_{ij}=0|X_{ij}=1,S^*_{ij}=0).
\end{equation}
Here and elsewhere, we use superscript $(ij)$ to highlight that the channel parameters are dependent on the cell location $(i,j)$. 

Through equations (\ref{equ:set_success}) - (\ref{equ:p2}), we are able to relate the write margin $\tilde{V}_w(r^*_{ij},i,j)$ to the BER of the write channel. For example, comparing the best-case cell to the worst-case cell in the example in Section III.D Fig. \ref{fig:heat_map}, we observe that the write margin for Reset is dropped from $4.9 V$ to $1.64 V$ while the write BER is increased from $3.35\times10^{-4}$ to $1.75\times10^{-2}$, thus providing further evidence that location dependent BER analysis matters.

\subsection{Read Channel}
When reading from the cell $(i.j)$, we consider the current-mode sensing scheme and a fixed threshold detector. Let $I^{(ij)}_r$ be the current sensed by the sensing amplifier, which can be also calculated by solving a system of KCL equations. $I^{(ij)}_r$ is hence dependent on the cell location, the resistance of the selected cell, and the resistances of unselected cells. Let $I_{th}$ be the threshold current. The threshold detector is as follows:
\begin{equation}
Z_{ij}=\begin{cases}
0, I^{(ij)}_r\leq I_{th},\\
1, I^{(ij)}_r>I_{th}.
\end{cases}
\end{equation}
With the threshold detector above, the decision error probabilities are:
\begin{equation}
\label{equ:0_read_error}
P(Z_{ij}=1|Y_{ij}=0) = P(I^{(ij)}_r> I_{th}|Y_{ij}=0);
\end{equation}

\begin{equation}
\label{equ:1_read_error}
P(Z_{ij}=0|Y_{ij}=1) = P(I^{(ij)}_r\leq I_{th}|Y_{ij}=1).
\end{equation}
This leads to the read binary asymmetric channel, depicted in Fig. \ref{fig:cascaded_channel}, for the read operation with $p^{(ij)}_3 = P(Z_{ij}=1|Y_{ij}=0)$ and $p^{(ij)}_4 = P(Z_{ij}=0|Y_{ij}=1)$.

\subsubsection{Closed form Expression with Ideal Selectors}
Since we need to solve a system of equations to get $I^{(ij)}_r$, equations (\ref{equ:0_read_error}) and (\ref{equ:1_read_error}) are not sufficient as they do not give closed-form expressions for the channel parameters. However, if we consider ideal selectors, closed-form expressions can be derived. { Note that for the analysis with an unideal selector, one can still use the general characterizations in equations (\ref{equ:0_read_error}) and (\ref{equ:1_read_error}) and find $I^{(ij)}_r$ by solving a system of KCL equations. Moreover, it is reasonable to assume ideal characteristics of the selector for the analysis of the line resistance in the 1S1R structure as near ideal selector properties are demonstrated by industry in  \cite{jo20143d}.}

With ideal selectors, the part of the circuit connected to the un-selected cells can be neglected, resulting in a simplified circuit with just the selected cell and its wordline/bitline. With this simplified circuit, $I^{(ij)}_r$ is a function of the random variable $R_{ij}$, which represents the resistance of the selected cell. We therefore have:
\begin{equation}
\label{equ:I_r}
I^{(ij)}_r = \frac{V_r}{ir_b+jr_w+R_{ij}}.
\end{equation}
Plugging (\ref{equ:I_r}) into (\ref{equ:0_read_error}) and (\ref{equ:1_read_error}), and using the assumption that $R_{ij}$ is conditionally (on $Y_{ij}$) log-normally distributed, we obtain the following closed form expression for $p_3$ and $p_4$:
\begin{equation}
\label{equ:p3}
\begin{split}
p^{(ij)}_3 &= P\left(\frac{V_r}{ir_b+jr_w+R_{ij}}>I_{th}|Y_{ij}=0\right)\\
&=P\left(R_{ij}<\frac{V_r}{I_{th}}-ir_b-jr_w|Y_{ij}=0\right)\\
&=Q\left(\frac{\mu_{H}-\ln\left(\frac{V_r}{I_{th}}-ir_b-jr_w\right)}{\sigma_{H}}\right),
\end{split}
\end{equation}
and similarly
\begin{equation}
\label{equ:p4}
\begin{split}
p^{(ij)}_4 =Q\left(\frac{\ln\left(\frac{V_r}{I_{th}}-ir_b-jr_w\right)-\mu_{L}}{\sigma_{L}}\right).
\end{split}
\end{equation}

Define $R_{th} = \frac{V_r}{I_{th}}$. From equations (\ref{equ:p3}) and (\ref{equ:p4}), we observe that $R_{th}$ is the effective decision threshold between the HRS and LRS distribution in the resistance domain, when there {is} no line resistance, i.e., $r_w=r_b=0$. We can therefore interpret the adverse effect of line resistances during the read operation as follows: the effective read threshold in resistance domain is shifted to the left by the total accumulated line resistance. This shift results in a higher bit-error rate if $R_{th}$ is set to be the optimal decision threshold without considering the line resistance. 

The read margin is defined by the difference between the sensed current of a HRS cell and the sensed current of a LRS cell. Using equations (\ref{equ:I_r}) - (\ref{equ:p4}), we can now relate the read margin to the read BER. For example, comparing the best-case cell to the worst-case cell in the example in Section III.D Fig. \ref{fig:heat_map}, we observe that the read margin is dropped from $296 \mu A$ to $95 \mu A$ while the write BER is increased from $4.29\times10^{-4}$ to $7.33\times10^{-2}$, again demonstrating the need of a location dependent model.

\subsection{Cascaded Channel and Channel Capacity}
Combining the results of the previous two subsections, we get a cascaded channel for a single memory cell. The cascaded channel is a binary asymmetric channel and it is depicted in Fig. \ref{fig:cascaded_channel}, with $p^{(ij)}_5 = p^{(ij)}_1(1-p^{(ij)}_4)+(1-p^{(ij)}_1)p^{(ij)}_3$ and $p^{(ij)}_6 = p^{(ij)}_2(1-p^{(ij)}_3)+(1-p^{(ij)}_2)p^{(ij)}_4$.

The capacity of this cascaded channel for cell $(i,j)$ is as follows:
\begin{equation}
\label{equ:cell_capacity}
\begin{split}
C_{ij} &= \max_{q}\,I(X_{ij};Z_{ij})\\
&=\max_{q}\,\Bigg[h\left(q\left(1-p^{(ij)}_5\right)+(1-q)p^{(ij)}_6\right)\\
&\quad\quad\quad-qh\left(p^{(ij)}_5\right)-(1-q)h\left(p^{(ij)}_6\right)\Bigg],
\end{split}
\end{equation}
where $h(\cdot)$ is the binary entropy function. Because $p^{(ij)}_1$ and $p^{(ij)}_2$ are dependent on $q$, the closed form capacity result for a standard BAC does not hold. The channel capacity therefore need to be evaluated with a numerical method, e.g., the Blahut-Arimoto algorithm, as further presented in Subsection D. 

\subsection{Simulations Results}
Based on our models presented in the previous subsections, we simulate multiple arrays to explore how memory parameters affect the memory reliability metrics, such as the bit-error rate (BER) and the averaged capacity. We calculate the averaged capacity by averaging the capacities of cells given by equation (\ref{equ:cell_capacity}); this result serves as an indicator of what fraction of the input data can be reliably stored in memory. Since this work is mainly focused on the adverse effect of the line resistance, we only vary the array size, aspect ratio, and line resistance in our simulations. Other memory parameters are kept the same and are summarized in Table \ref{table1}. As an illustrative example, the parameters are chosen to represent a moderate reliability level, with a BER on the order of $10^{-3}$ in the best case scenario. { The considered line resistances range from $10\Omega$ to $100\Omega$, in accordance with the interconnect resistance values of interest for moderate technology nodes \cite{liang2013effect}.  The chosen standard deviation (0.3) of LRS and HRS distribution is experimentally observed in \cite{ji2015line}. The chosen switching parameters are based on \cite{medeiros2011lognormal}; we use the same parameters for reset and set operations for simplified analysis.
}
\begin{table}[h!]
	\centering
	\renewcommand{\arraystretch}{1}
	\begin{tabular}{c|c|c}
		\hline
		\hline
		Symbol 		& Parameters 						& Values \\ 
		$m,n$			& Array Size ($m\times n$)          &   varies               \\ 
		$V_{w\_set}$			& Set voltage       &    -5V              \\ 
		$V_{w\_reset}$			& Reset voltage       &    5V              \\ 
		$V_{r}$			& Read voltage       &    3V              \\ 
		$q$		& Prior symbol probability of 0          & $0.5$                 \\ 
		$r_w$		&Wordline interconnect resistance       &   $10\Omega-100\Omega$               \\ 
		$r_b$		&Bitline interconnect resistance       &    $10\Omega-100\Omega$              \\ 
		$r_{sf}$		&Fully selected selector resistance    &         $0$       \\ 
		$r_{sh}$		&Half selected selector resistance    &         $\infty$       \\ 
		$r_{su}$		&Unselected selector resistance   &         $\infty$       \\ 
		$\mu_L$ 	& Associated mean of LRS distribution     	&    $4\ln(10)$              \\
		$\mu_H$ 	& Associated mean of HRS distribution      	&    $6\ln(10)$              \\
		$\sigma_L$ 	& Associated std of LRS distribution       	&      $0.3\ln(10)$            \\
		$\sigma_H$ 	& Associated std of HRS distribution       	&      $0.3\ln(10)$           \\
		$\alpha_{set}$ 	& Parameter for the median set time    	&      $0.25$            \\
		$\beta_{set}$ 	& Parameter for the median set time    	&       $4.25$           \\
		$\alpha_{reset}$ 	& Parameter for the median reset time    	&    $-0.25$              \\
		$\beta_{reset}$ 	& Parameter for the median reset time    	&     $4.25$             \\
		$\sigma_{set}$ 	& Associated std of set time distribution       	&        $0.5$          \\
		$\sigma_{reset}$ 	& Associated std of reset time distribution       	&        $0.5$          \\
		$t_{set}$ 	&  Switching time for set operation      	&    $100\mu s$              \\
		$t_{reset}$ 	&  Switching time for reset operation      	&      $100\mu s$            \\
		$I_{th}$ 	& Read decision threshold       	&      $30\mu A$            \\
		\hline
		\hline
	\end{tabular}
	\caption{Summary of Parameters.}
	\label{table1}
	\vspace{-1em}
\end{table}

\begin{figure}[h]
	\centering
	\begin{subfigure}{0.4\textwidth}
		\centering
		\includegraphics[scale=0.25]{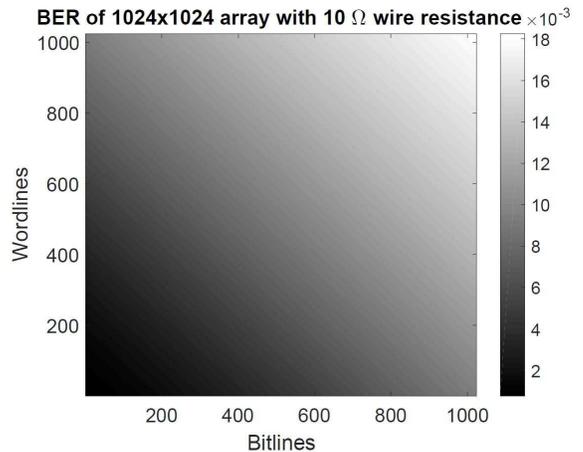}	
		\caption{Heatmap of BERs for a 1024x1024 array}
		\label{fig:heat_map}
	\end{subfigure}
	\begin{subfigure}{0.4\textwidth}
		\centering
		\includegraphics[scale=0.25]{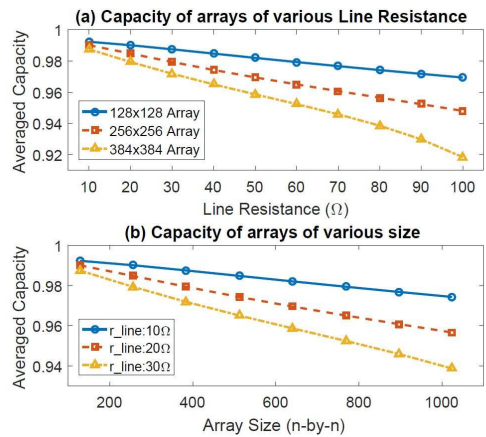}	
		\caption{Capacity results with various sizes and line resistances}
		\label{fig:Capacity_vs_N_and_vs_R}
	\end{subfigure}
	\caption{Simulation Results}
	\vspace{-1em}
\end{figure}
In Fig. \ref{fig:heat_map}, we first present the BER of each cell in a $1024\times1024$ array to illustrate the spatial variation of reliability due to the line resistance. According to \cite{liang2013effect}, the chosen $10\Omega$ line resistance corresponds to the resistance per junction of Cu wire with 20nm technology nodes. With this moderate line resistance, we observe an order of magnitude BER difference between the best-case cell, located closest to the voltage source, and the worst-case cell, located furthest from the voltage source. Due to line resistance, the cell which is further from the source and sensing amplifier, suffers from a lower voltage delivery during the write operation and a higher resistance interference during the read operation, thus has a larger BER. 

Next, in Fig. \ref{fig:Capacity_vs_N_and_vs_R}, we present the averaged capacity per cell for arrays with various size and line resistances, with aspect ratio fixed to be $1$. We observe that a larger line resistance, which corresponds to a smaller technology node, deteriorates the averaged capacity almost linearly. This trade-off thus must be taken into consideration when scaling the memory, as it is shown in \cite{liang2013effect} that the line resistance scales exponentially with respect to the technology node. Also note that when the accumulated line resistance of the worst-case cell gets close to the effective resistance threshold, i.e., when $nr_w+mr_b$ is close to $R_{th}$, the averaged capacity deteriorates faster, as from (\ref{equ:p4}), when $nr_w+mr_b>R_{th}$, reading from a LRS cell correctly is impossible. This explains the rapid dropping at the end of the curve in Fig. \ref{fig:Capacity_vs_N_and_vs_R} for the $384\times384$ array. From  Fig. \ref{fig:Capacity_vs_N_and_vs_R}, we notice that the averaged capacity also deteriorates almost linearly with respect to the array size. This effect is thus a limiting factor for the realization of a large memory array.

\begin{table}[h!]
	\renewcommand{\arraystretch}{1.1}
	\centering
	\begin{tabular}{|c|c|c|c|}
		\hline
		Array Size & $128\times128$ &$64\times512$  &$32\times512$  \\ \hline
		Averaged Capacity   & $0.9924$ & $0.9918$ & $0.9897$ \\ \hline\hline
		Array Size & $16\times1024$ &$8\times2048$  &$4\times4096$  \\ \hline
		Averaged Capacity   & $0.9845$ & $0.9745$ & $0.9573$ \\ \hline
	\end{tabular}
	\caption{Capacity of arrays with different aspect ratios.}
	\label{table2}
	\vspace{-1em}
\end{table}

We further investigate how the aspect ratio affects the averaged capacity by simulating arrays with the same number of cells but different aspect ratios. In Table \ref{table2}, with a total of $16384$ cells, the square array (aspect ratio = 1) has the largest averaged capacity and the $4\times 4096$ array, which has the largest aspect ratio, has the lowest averaged capacity. Intuitively, this can be explained by a larger possible cumulative line resistance $nr_w+mr_b$ in an array with a larger aspect ratio. This observation presents a trade-off between the sometimes desired high aspect ratio and a high averaged capacity. 

\section{Optimal Read Threshold}
In Section III.B, we observe that the channel parameters $p_3^{(ij)}$ are $p_4^{(ij)}$ are dependent on the read current threshold $I_{th}$ --- a user defined parameter that can be optimized for lower row bit-error rate. In this section, we study the optimal threshold for each cell and for an entire array with the goal of reducing the  row bit-error rate. We choose to optimize the read resistance threshold $R_{th}=\frac{V_r}{I_{th}}$ as it is equivalent to optimizing $I_{th}$ with fixed read voltage $V_r$. We define $R_{th0}$ be the optimal threshold when no line resistance is considered, i.e.,
\begin{equation}
\small
\label{equ:Rth0_opt}
R_{th0}=\underset{R_{th}}{\text{argmin}}\, qQ\left(\frac{\mu_H-\ln(R_{th})}{\sigma_H}\right)+(1-q)Q\left(\frac{\ln(R_{th})-\mu_L}{\sigma_L}\right)
\normalsize
\end{equation}
$R_{th0}$ can be calculated using standard result from estimation theory. $R_{th0}$ is clearly suboptimal when the line resistance is non-negligible. 

\subsection{Optimal Threshold for Each Cell}
One simple read scheme is to use the optimal resistance thresholds for cells at different locations. We call this the different threshold for each cell (DTEC) scheme. Define the optimal threshold for the cell $(i,j)$ to be $R^{ij}_{th}$. With the objective of minimizing $P(Z_{ij}~=Y_{ij})$, and using equation (\ref{equ:p3}) and equation (\ref{equ:p4}), we have:
\begin{equation}
\label{(equ:Rthij_opt)}
\begin{split}
R^{ij}_{th}&=\underset{R_{th}}{\text{argmin}}\, qQ\left(\frac{\mu_H-\ln(R_{th}-ir_b-jr_w)}{\sigma_H}\right)\\
&+(1-q)Q\left(\frac{\ln(R_{th}-ir_b-jr_w)-\mu_L}{\sigma_L}\right).
\end{split}
\end{equation}
Comparing equations (\ref{equ:Rth0_opt}) and (\ref{(equ:Rthij_opt)}), we get
\begin{equation}
\label{equ:opt_thr_cell}
R^{ij}_{th} = R_{th0}+ir_b+jr_w.
\end{equation}
This result is intuitive as we need to shift the threshold to the right in order to compensate for the adverse effect of the cumulative line resistance. 
\subsection{Optimal Threshold for An Array}
Requiring different thresholds for cells in a $m\times n$ array may not be desirable for circuit designers as doing this may require a lot more comparators. In typical memory design, cells on the same wordline share the same sensing amplifier, so one threshold for each column is a reasonable choice. To further simplify the memory design, one may even use the same threshold for the entire memory array. Therefore, it is of interest to find the optimal threshold that minimizes the averaged BER for an entire array or a sub-array (such as a column). In this subsection, we deal with the optimal threshold for an array first; this result readily generalizes to any sub-array. We call these schemes the same threshold for many cells (STMC) schemes.

With the objective of minimizing $\frac{1}{mn}\sum_{i,j}P(Z_{ij}~=Y_{ij})$, the optimal threshold for an array is defined as
\small
\begin{equation}
\label{(equ:Rtharray_opt)}
\begin{split}
R_{th\_array} = &\underset{R_{th}}{\text{argmin}}\frac{1}{mn}\sum_{i,j}^{m,n}\Bigg[ qQ\left(\frac{\mu_H-\ln(R_{th}-ir_b-jr_w)}{\sigma_H}\right)\\
&+(1-q)Q\left(\frac{\ln(R_{th}-ir_b-jr_w)-\mu_L}{\sigma_L}\right)\Bigg].
\end{split}
\end{equation}
\normalsize

{While it may be possible to heuristically optimize the single parameter $R_{th}$ in practice, we wish to provide an analytical solution based on certain approximations and an efficient iterative search algorithm. Analytically,} the optimization problem in (\ref{(equ:Rtharray_opt)}) is hard to solve for  as it involves a summation of Q-functions. We instead replace this objective function with its upper bound and try to minimize this upper bound. Using Jensen's inequality and the fact that the Q-function is concave for a positive argument, we have the following bound:
\begin{equation}
\begin{split}
&\frac{1}{mn}\sum_{i=1}^{m}\sum_{j=1}^{n}\Bigg[ qQ\left(\frac{\mu_H-\ln(R_{th}-ir_b-jr_w)}{\sigma_H}\right)\\
&+(1-q)Q\left(\frac{\ln(R_{th}-ir_b-jr_w)-\mu_L}{\sigma_L}\right)\Bigg]\\
\leq&qQ\left(\frac{\mu_H-A}{\sigma_H}\right)+(1-q)Q\left(\frac{A-\mu_L}{\sigma_L}\right),
\end{split}
\end{equation}
where
\[A = \frac{1}{mn}\sum_{i=1}^{m}\sum_{j=1}^{n}\Big[\ln(R_{th}-ir_b-jr_w)\Big].\]
The gap between the two sides of this inequality is small when the Q-functions are close to being linear, which is indeed the case for Q-functions with large arguments. 

We reformulate the problem using the above inequality:
\begin{equation}
\label{(equ:Rtharray_opt2)}
R_{th\_array}=\underset{R_{th}}{\text{argmin}}\,qQ\left(\frac{\mu_H-A}{\sigma_H}\right)+(1-q)Q\left(\frac{A-\mu_L}{\sigma_L}\right).
\end{equation}
Comparing equations (\ref{equ:Rth0_opt}) and (\ref{(equ:Rtharray_opt2)}), the problem becomes finding $R_{th\_array}$ such that 
\begin{equation}
\label{equ:log_equ}
\ln(R_{th0})=\frac{1}{mn}\sum_{i=1}^{m}\sum_{j=1}^{n}\Big[\ln(R_{th\_array}-ir_b-jr_w)\Big].
\end{equation}
Equation (\ref{equ:log_equ}) is hard to solve since it contains a summation of logarithm functions. We provide both an approximation to this equation that is easier to solve, as well as an iterative algorithm that produces a solution to the original equation.

Approximating each terms in the right hand side summation by $\ln(R_{th\_array}-\frac{m+1}{2}r_b-\frac{n+1}{2}r_w)$, the approximate solution of (\ref{equ:log_equ}) can be solved:
\begin{equation}
\label{equ:log_equ_sol_1}
R_{th\_array}\approx R_{th\_array\_appx}=R_{th0}+\frac{m+1}{2}r_b+\frac{n+1}{2}r_w.
\end{equation}
This approximation can be also interpreted as averaging the optimal thresholds, given by equation (\ref{equ:opt_thr_cell}), of all cells.

We propose Algorithm 1 to compute the exact solution of equation (\ref{equ:log_equ}).

\begin{algorithm}[h]
	\caption{STMC threshold solver algorithm}
	\SetAlgoLined
	1. Initialize $R^{(0)}_{th}=R_{th0},{l}=0$.\\
	2. ${l}={l}+1$.\\
	\quad Calculate $R^{({l})}_{th}$ such that $\ln(R^{({l})}_{th}) = \ln(R_{th0})-\frac{1}{mn}\sum_{i=1}^{m}\sum_{j=1}^{n}\ln\left(1-\frac{ir_b+jr_w}{R^{({l}-1)}_{th}}\right)$.\\
	3. Repeat step 2 until $k$ iterations is reached or $R^{({l})}_{th}-R^{({l}-1)}_{th}\leq\epsilon$. Let $R_{th\_array} = R^{({l})}_{th}$.
\end{algorithm}

The STMC threshold solver algorithm is inspired by how the summation of logarithm functions is handled in the Expectation Maximization algorithm. The convergence of theSTMC threshold solver algorithm is demonstrated the following lemma.
\begin{lemma}
	The STMC threshold solver algorithm converges to the solution of equation (\ref{equ:log_equ}). 
\end{lemma}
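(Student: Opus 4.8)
The plan is to recognize Algorithm 1 as a fixed-point iteration and to prove convergence via the contraction mapping principle. Writing $c_{ij}=ir_b+jr_w$ and $c_{\max}=mr_b+nr_w$, and passing to the logarithmic variable $x=\ln R$, the update in step 2 reads $x^{(\ell)}=\phi(x^{(\ell-1)})$ with
\[\phi(x)=\ln(R_{th0})-\frac{1}{mn}\sum_{i=1}^{m}\sum_{j=1}^{n}\ln\!\left(1-c_{ij}e^{-x}\right).\]
Using $\ln(R-c_{ij})=\ln R+\ln(1-c_{ij}/R)$, a point $x^\star=\ln R^\star$ is fixed by $\phi$ if and only if $R^\star$ satisfies (\ref{equ:log_equ}); hence it suffices to show the iterates $x^{(\ell)}$ converge to a fixed point of $\phi$.

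First I would settle existence and uniqueness of the target. On the domain $R>c_{\max}$ (required for all logarithms, and for the threshold to be meaningful, since $c_{\max}<R_{th}$ is the feasibility condition noted after (\ref{equ:p4})), the function $F(R)=\frac{1}{mn}\sum_{i,j}\ln(R-c_{ij})$ is continuous, strictly increasing, and ranges over all of $\mathbb{R}$; thus $F(R)=\ln(R_{th0})$ has a unique root $R^\star$. Moreover $F(R_{th0})<\frac{1}{mn}\sum_{i,j}\ln(R_{th0})=\ln(R_{th0})$, so $R^\star>R_{th0}$, consistent with the rightward shift of (\ref{equ:opt_thr_cell}).

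The core step is the convergence argument. Differentiating gives $\phi'(x)=-\frac{1}{mn}\sum_{i,j}\frac{c_{ij}}{e^{x}-c_{ij}}<0$, so $\phi$ is strictly decreasing and the iterates oscillate about $R^\star$ rather than approaching it monotonically. Starting from $R^{(0)}_{th}=R_{th0}$, one checks that every iterate stays in the compact interval $[\,\ln R_{th0},\,\phi(\ln R_{th0})\,]$ (decreasingness makes this interval invariant), on which $e^{x}\ge R_{th0}$ and therefore
\[|\phi'(x)|=\frac{1}{mn}\sum_{i,j}\frac{c_{ij}}{e^{x}-c_{ij}}\le\frac{c_{\max}}{R_{th0}-c_{\max}}=:\rho.\]
In the operating regime the accumulated line resistance is small relative to the threshold ($R_{th0}>2c_{\max}$, comfortably inside the feasibility region $c_{\max}<R_{th0}$), so $\rho<1$ and $\phi$ is a contraction on this interval. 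The Banach fixed-point theorem then yields geometric convergence of $x^{(\ell)}$ to the unique fixed point, i.e. $R^{(\ell)}_{th}\to R^\star$, the solution of (\ref{equ:log_equ}).

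I expect the oscillatory (decreasing-map) behavior to be the main obstacle: unlike an EM-style monotone ascent, convergence does not follow from monotonicity alone, and a naive monotone-subsequence argument only shows the even and odd iterates tend to a possibly distinct pair $(L_e,L_o)$ forming a $2$-cycle, $\phi(\ln L_e)=\ln L_o$ and $\phi(\ln L_o)=\ln L_e$. Ruling this out reduces to the same derivative control: with $a=\ln L_e\le b=\ln L_o$ one has $b-a=\phi(a)-\phi(b)=\int_a^b|\phi'(t)|\,dt$, which contradicts $|\phi'|<1$ unless $a=b$. Thus the crux of the whole proof is the bound $\rho<1$, and the only care needed is to state the mild regime assumption on $c_{\max}/R_{th0}$ under which it holds.
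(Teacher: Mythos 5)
Your proof is correct, but it takes a genuinely different route from the paper's. The paper works directly with the iterates: it observes that the map is order-reversing, so the iterates alternate around $R_{th\_array}$, giving the sandwich $\ln(R^{(2j+1)}_{th})\leq \ln(R_{th\_array})\leq \ln(R^{(2j)}_{th})$ (with a sign/parity slip, since the initialization $R^{(0)}_{th}=R_{th0}$ actually lies \emph{below} $R_{th\_array}$), and then asserts that the two monotone subsequences shrink toward the fixed point. You instead pass to $x=\ln R$, set up the iteration as $x^{(\ell)}=\phi(x^{(\ell-1)})$, and prove contraction via the bound $|\phi'|\leq c_{\max}/(R_{th0}-c_{\max})<1$ on an invariant interval, invoking Banach's fixed-point theorem. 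Your approach buys more than the paper's: the monotone-subsequence argument alone only shows the even and odd iterates converge to some pair $(L_e,L_o)$ with $\phi(\ln L_e)=\ln L_o$ and $\phi(\ln L_o)=\ln L_e$, and the paper never rules out $L_e\neq L_o$ (a genuine possibility for a decreasing map), so your observation that the crux is the derivative bound is exactly the missing ingredient in the paper's own proof. The price is the explicit regime assumption $R_{th0}>2c_{\max}$ needed to make $\rho<1$; you state it honestly, it holds comfortably for the paper's simulation parameters, and some condition of this type is genuinely required, so this is a feature of your argument rather than a defect. You also supply the existence/uniqueness of the root of (\ref{equ:log_equ}), which the paper takes for granted.
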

\begin{proof}
	Equation (\ref{equ:log_equ}) can be rewritten as:
	\[\ln(R_{th\_array}) = \ln(R_{th0})-\frac{1}{mn}\sum_{i=1}^{m}\sum_{j=1}^{n}\ln\left(1-\frac{ir_b+jr_w}{R_{th\_array}}\right).\]
	With $R^{(0)}_{th}\geq  R_{th\_array}$, the following inequality holds:
	\[
	\begin{split}
	\ln(R_{th\_array})&\geq \ln(R^{(1)}_{th})\\ &= \ln(R_{th0})-\frac{1}{mn}\sum_{i=1}^{m}\sum_{j=1}^{n}\ln\left(1-\frac{ir_b+jr_w}{R^{(0)}_{th}}\right).
	\end{split}\]
	By induction, $\ln(R_{th\_array})$ can be bounded:
	$\ln(R^{(2j+1)}_{th})\leq \ln(R_{th\_array})\leq \ln(R^{(2j)}_{th}), j\in\mathbb{Z}.$
	With {$R^{(2)}_{th}> R^{(0)}_{th}=R_{th0}$}, since $\ln(R^{({l})}_{th}) = \ln(R_{th0})-\frac{1}{mn}\sum_{i=1}^{m}\sum_{j=1}^{n}\ln\left(1-\frac{ir_b+jr_w}{R^{({l}-1)}_{th}}\right)$, the subsequent upper bound and lower bound are shrinking toward $\ln(R_{th\_array})$, i.e., $\ln(R^{(2j+2)}_{th})< \ln(R^{(2j)}_{th})$ and $\ln(R^{(2j+3)}_{th})>\ln(R^{(2j+1)}_{th})$. Therefore, the STMC threshold solver algorithm converges.
\end{proof}
{For the simulations we performed in the next section,  the STMC threshold solver algorithm converges in $5$ iterations which is more effective than performing a line search to solve equation (\ref{(equ:Rtharray_opt)}) empirically.}

\subsection{Simulation Results}
We simulate arrays to examine how the averaged read BERs are affected by different read thresholding schemes. Four different thresholding schemes are compared: the naive scheme where $R_{th0}$ is used as the threshold; the DTEC scheme; the STMC scheme with the approximated solution in (\ref{equ:log_equ_sol_1}); and the STMC scheme with the exact solution solved by Algorithm 1. Unless otherwise mentioned, we use parameters from Table \ref{table1}. We first vary the array size and fix $r_b=r_w=30\Omega$ and report the result in Fig. \ref{fig:thr_BER_vs_n}. We also vary the wire resistance and fix $n=m=1024$ and report the result in Fig. \ref{fig:thr_BER_vs_r_wire}. 

\begin{figure}[h]
	\centering
	\begin{subfigure}{0.4\textwidth}
	\centering
	\includegraphics[scale=0.25,clip]{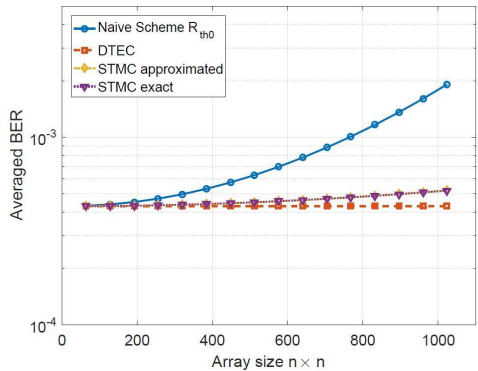}
	\caption{Averaged read BER v.s. array size using different thresholding scheme.}
	\label{fig:thr_BER_vs_n}
	\end{subfigure}
	\begin{subfigure}{0.4\textwidth}
	\centering
	\includegraphics[scale=0.25,clip]{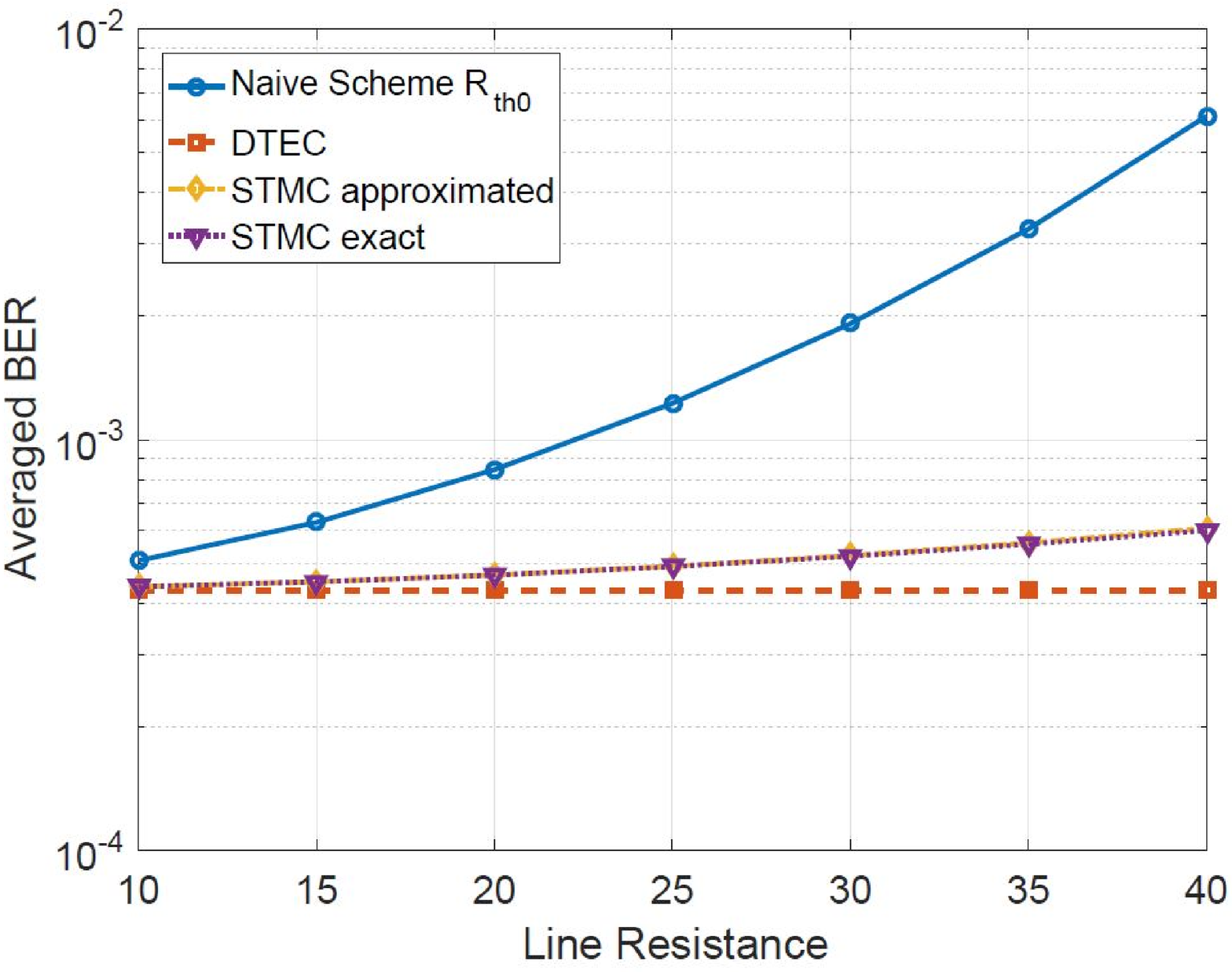}
	\caption{Averaged read BER v.s. wire resistance using different thresholding scheme.}
	\label{fig:thr_BER_vs_r_wire}
	\end{subfigure}
	\caption{Optimal Threshold Simulation Results}
	\vspace{-1em}
\end{figure}

From both Fig. \ref{fig:thr_BER_vs_n} and Fig. \ref{fig:thr_BER_vs_r_wire}, we observe that both DTEC and STMC schemes reduce the read BER significantly compared with the naive thresholding scheme. The DTEC scheme compensates for each cell based on its location, and, as a result, the averaged BER is independent of the array size and wire resistance. As expected, the averaged BERs using the exact solution of the STMC scheme is smaller than the averaged BERs using the approximated solution. However, the improvement is incremental and can not be identified on the plots. This shows that equation (\ref{equ:log_equ_sol_1}) approximates the solution of equation (\ref{equ:log_equ}) very well.
\section{Conclusion and Future Works}
In this paper, we proposed the write and read channel models for the 1S1R crossbar resistive memory while considering the nondeterministic nature of the memory device. We studied the optimal read threshold which reduces the row bit-error rate efficiently. Future research is in the direction of leveraging the channel information to improve memory reliability. This includes using error-correction codes with different correction capability for different rows and incorporating the location dependent channel information in LDPC decoding.
\section*{Acknowledgment}
Research supported in part by a grant from UC MEXUS and an NSF-BSF grant no.1718389.

% Generated by IEEEtran.bst, version: 1.14 (2015/08/26)

\end{document}